\begin{document}

\newtheorem{theorem}{Theorem}[section]
\newtheorem{lemma}[theorem]{Lemma}
\newtheorem{proposition}[theorem]{Proposition}
\newtheorem{corollary}[theorem]{Corollary}
\newtheorem{conj}[theorem]{Conjecture}
\newtheorem{remark}[theorem]{Remark}
\newtheorem{propos}[theorem]{Proposition}
\newtheorem{claim}[theorem]{Claim}
\newtheorem{definition}[theorem]{Definition}

\newcommand{\partdiff}[2]{\frac{\partial {#1}}{\partial {#2}}}
\newcommand{\secdiff}[2]{\frac{\partial^2 {#1}}{\partial {#2}^2}}
\newcommand{\mixdiff}[3]{\frac{\partial^2 {#1}}{{\partial {#2}}{\partial {#3}}}}

\def\E{{\bf E}}
\def\b1{{\bf 1}}
\def\be{{\bf e}}
\def\bp{{\bf p}}
\def\RR{{\mathbb R}}
\def\ZZ{{\mathbb Z}}
\def\cA{{\cal A}}
\def\cB{{\cal B}}
\def\cI{{\cal I}}
\def\cS{{\cal S}}
\def\cD{{\cal D}}

\title{\Large Online submodular welfare maximization:\\ Greedy is optimal}

\newcommand*\samethanks[1][\value{footnote}]{\footnotemark[#1]}

\author{Michael Kapralov\thanks{
CSAIL, Massachusetts Institute of Technology, Cambridge, MA. Work done while at Stanford University.
Email: \href{mailto:kapralov@mit.edu}{kapralov@mit.edu}. Research supported by NSF grant 0904325.}
\thanks{
We also acknowledge financial support from grant \#FA9550-12-1-0411
from the U.S. Air Force Office of Scientific Research (AFOSR) and the
Defense Advanced Research Projects Agency (DARPA).
}
\and Ian Post\thanks{Department of Combinatorics \& Optimization, University of Waterloo, Waterloo, ON, Canada.
Work done while at Stanford University.
Email: \href{mailto:ipost@uwaterloo.ca}{ipost@uwaterloo.ca}. Research supported by NSF grant 0915040.}
\samethanks[2]
\and Jan Vondr\'ak\thanks{
IBM Almaden Research Center, San Jose, CA. E-mail: \href{mailto:jvondrak@us.ibm.com}{jvondrak@us.ibm.com}.}
}

\date{}

\maketitle

\abstract{We prove that no online algorithm (even randomized, against an oblivious adversary) is better than $1/2$-competitive for welfare maximization with coverage valuations, unless $NP = RP$. Since the Greedy algorithm is known to be $1/2$-competitive for monotone submodular valuations, of which coverage is a special case, this proves that Greedy provides the optimal competitive ratio. On the other hand, we prove that Greedy in a stochastic setting with i.i.d.~items and valuations satisfying diminishing returns is $(1-1/e)$-competitive, which is optimal even for coverage valuations, unless $NP=RP$. For online budget-additive allocation, we prove that no algorithm can be $0.612$-competitive with respect to a natural LP which has been used previously for this problem.}

\section{Introduction}

We study an online variant of the welfare maximization problem in combinatorial auctions: $m$ items are arriving online, and each item should be allocated upon arrival to one of $n$ agents whose interest in different subsets of items is expressed by valuation functions $w_i: 2^{[m]} \rightarrow \RR_+$. The goal is to maximize $\sum_{i=1}^{n} w_i(S_i)$ where $S_i$ is the set of items allocated to agent $i$. Variants of the problem arise by considering different classes of valuation functions $w_i$ and different models (adversarial/stochastic) for the arrival ordering of the items. We remark that in this work we do not consider any game-theoretic aspects of this problem.

The origin of this line of work can be traced back to a seminal paper of Karp, Vazirani and Vazirani \cite{KVV90} on online bipartite matching. This can be viewed as a welfare maximization problem where one side of the bipartite graph represents agents and the other side items; each agent $i$ is interested in the items $N(i)$ joined to $i$ by an edge, and he is completely satisfied by 1 item, meaning the valuation function can be written as $w_i(S) = \min \{ |S \cap N(i)|, 1 \}$. Karp, Vazirani and Vazirani gave an elegant $(1-1/e)$-competitive randomized algorithm, which improves a greedy $1/2$-approximation and is optimal in this setting.

Recent interest in online allocation problems arises from applications in online advertising, where the items represent ad slots associated with search queries, and agents are advertisers interested in having their ad displayed in connection with certain queries. A popular model in this context is the {\em budget-additive} framework \cite{GKP01,MSVV05,BJN07} where valuations have the form $w_i(S) = \min \{ \sum_{j \in S} b_{ij}, B_i \}$. More generally, combinatorial auctions \cite{LLN06} form a setting where multiple items are sold to multiple agents with valuation functions $w_i$. Again, practical restrictions often require that the decision about each item needs to be made immediately, rather than after seeing the entire pool of items. Hence the online model, which we study in this paper.

The baseline algorithm in this setting is the {\em greedy algorithm}, due to Fisher, Nemhauser and Wolsey, who initiated the study of problems involving maximization of submodular functions \cite{NWF78, FNW78, NW78}. The greedy algorithm simply allocates each incoming item to the agent who gains from it the most and is $1/2$-competitive whenever the valuation functions of the agents are {\em monotone submodular} \cite{FNW78,LLN06}. This is in fact the most general setting known where a constant-factor approximation can be achieved even for the offline welfare maximization problem (using {\em value queries}; more general classes of valuations can be handled when more powerful queries are available \cite{Fei06}). Thus the basic question in most variants of this problem is whether the factor of $1/2$ is optimal or can be improved. For the offline welfare maximization problem with monotone submodular valuations, a $(1-1/e)$-approximation has been found \cite{V08}, and this is optimal \cite{KLMM08}.

At the other end of the spectrum is the above-mentioned bipartite matching problem, which can be viewed as a welfare maximization problem with valuation functions of the form $w_i(S) = \min \{ |S \cap N(i)|, 1\}$ (a very special case of a submodular function). The $(1-1/e)$-competitive algorithm of \cite{KVV90} is optimal in the adversarial online setting; several improvements have been obtained in various stochastic settings \cite{GM08,FMMM09,BK10,MOS11,MOZ12}. Factor $(1-1/e)$-competitive algorithms have been also found in two adversarial budget-additive settings, the small-bids case, $w_i(S) = \min \{ \sum_{j \in S} b_{ij}, B_i\}$ where $b_{ij} \ll B_i$ \cite{MSVV05}, and the single-bids case, $w_i(S) = \min \{ \sum_{j \in S} b_{ij}, B_i\}$ where $b_{ij} \in \{0,b_i\}$ for some $b_i$ independent of $j$ \cite{AGKM11}. A unifying generalization of these $(1-1/e)$-competitive algorithms to the budget-additive setting, $w_i(S) = \min \{ \sum_{j \in S} b_{ij}, B_i\}$, has been conjectured but still remains open. Prior to this work, it was conceivable that a $(1-1/e)$-competitive algorithm might exist for arbitrary monotone submodular valuations, but the best known online algorithm gave only an $o(1)$ improvement over $1/2$ \cite{DS06}.

\paragraph{Our results.}
We prove that:
\begin{itemize}
\item In the online setting with submodular valuations, the factor of $1/2$ cannot be improved unless $NP=RP$ (even by randomized algorithms against an oblivious adversary). Hence, the greedy $1/2$-competitive algorithm is optimal up to lower-order terms. This holds in fact for the special case of {\em coverage valuations} (see Section~\ref{sec:online-coverage}).

\item In the online setting with budget-additive valuations, we prove that no (randomized) algorithm is $0.612$-competitive with respect to a natural LP, which has been used successfully in the special case of small bids \cite{BJN07}. Thus, a $(1-1/e)$-competitive algorithm would need to use a different approach (see Section~\ref{sec:budget-additive}).

\item In a stochastic setting with items arriving i.i.d.~from an unknown distribution, the greedy algorithm is $(1-1/e)$-competitive for valuations with the property of diminishing returns (a natural extension of submodularity to multisets which we define in Section~\ref{sec:diminishing}). This is optimal even for coverage valuations and a known (uniform) distribution, unless $NP=RP$ (see Section~\ref{sec:stochastic}).
\end{itemize}

\paragraph{Our techniques.}
Our hardness result for online algorithms in the adversarial setting relies on a combination of two sources of hardness: (1) the inapproximability of Max $k$-cover due to Feige \cite{Feige98}, and (2) the lack of information arising from the unknown online ordering. A careful combination of these two ingredients gives an optimal hardness result ($1/2+\epsilon$) for online algorithms under coverage valuations. Our hardness result in the i.i.d.~stochastic setting also relies on the hardness of Max $k$-cover. A consequence of our use of the computational hardness of (offline) Max $k$-cover is that our results rely on a complexity-theoretic assumption ($NP=RP$), which is somewhat unusual in the context of online algorithms.

Our negative result for budget-additive valuations is based on an integrality-gap example for the natural LP and does not rely on any complexity-theoretic assumption. This result does not rule out, e.g., a $(1-1/e)$-competitive algorithm for budget-additive valuations, but we consider it instructive, considering recent efforts to develop online algorithms in the primal-dual framework. Our result points to the fact that perhaps the natural LP is too weak for the general budget-additive setting, and stronger LPs such as the Configuration LP should be considered (also, see a discussion in \cite{CG08}).

Finally, our positive result for the greedy algorithm in the i.i.d.~stochastic setting is an extension of a similar analysis for budget-additive valuations \cite{DJSW11}. Here, we want to point out the definition of valuation functions satisfying the property of {\em diminishing returns} (Section~\ref{sec:diminishing}). This is a generalization of submodularity to functions on multisets. We remark that for set functions, the properties of diminishing returns and submodularity coincide, but this is not the case for functions on multisets. We believe that our generalization is a natural one, considering the original motivation for submodularity in the context of combinatorial auctions, and we wish to highlight this definition for possible future work.

\

In the following, we state our results more formally and present the proofs.

\section{Online allocation with coverage valuations}
\label{sec:online-coverage}

\paragraph{Welfare maximization.}
In the welfare maximization problem (sometimes also referred to as the ``allocation problem" or ``combinatorial auctions"), the goal is to allocate
$|M| = m$ items to $n$ agents with valuation functions $w_i:2^M \rightarrow \RR_+$ in a way that maximizes $\sum_{i=1}^{n} w_i(S_i)$,
where $S_i$ is the set of items allocated to agent $i$ (satisfying $S_i \cap S_j = \emptyset$ for $i \neq j$).

\paragraph{Online welfare maximization.}
In the online version of the problem, items arrive one by one and we have to allocate each item when it arrives,
knowing only the agents' valuations on the items that have arrived so far. An algorithm is $c$-competitive
if, for any ordering of the incoming items, it achieves at least a $c$-fraction of the (offline) optimal welfare.
A randomized algorithm is $c$-competitive against an oblivious adversary if, for any ordering of the incoming items
(fixed before running the algorithm), it achieves at least a $c$-fraction of the optimal welfare in expectation.

\paragraph{Coverage valuations.}
A valuation function $w:2^M \rightarrow \RR_+$ is called a {\em coverage valuation} if there is a set system $\{A_j: j \in M\}$ such that $w(S) = |\bigcup_{j \in S} A_j|$ for all $S \subseteq M$.

\paragraph{Submodular valuations.}
A valuation function $w:2^M \rightarrow \RR_+$ is called {\em submodular} if $w(S \cup T) + w(S \cap T) \leq w(S) + w(T)$.
It is called {\em monotone} if $w(S) \leq w(T)$ whenever $S \subseteq T$.

\paragraph{Succinct representation and oracles.}
For complexity-theoretic considerations, it is important how the valuation functions are presented on the input. In this paper, we assume that coverage valuations are presented explicitly, by a succinct representation of size polynomial in $|M|$. Submodular valuations are presented by means of a {\em value oracle}, which can answer queries in the form ``What is the value of $w_i(S)$?"

\

Our main result for online welfare maximization is as follows.

\begin{theorem}
\label{thm:online-coverage}
Unless $NP = RP$, there is no $(1/2+\delta)$-competitive polynomial-time algorithm (even randomized, against an oblivious adversary) for the online welfare maximization problem with coverage valuations and constant $\delta>0$.
\end{theorem}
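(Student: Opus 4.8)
The plan is to reduce from the inapproximability of Max $k$-cover. By Feige's theorem \cite{Feige98} there is an absolute constant $\gamma>0$ for which the following \emph{gap} problem is $NP$-hard: given a ground set $U$ with $|U|=N$, a family $\mathcal{F}=\{F_1,\dots,F_t\}$ of subsets of $U$, and an integer $k$, decide whether (\emph{complete case}) some $k$ members of $\mathcal{F}$ cover all of $U$, or (\emph{deficient case}) every $k$ members of $\mathcal{F}$ cover at most $(1-1/e+\gamma)N$ elements. The reason the statement asks for $NP=RP$ rather than $P\neq NP$ is that we must also defeat \emph{randomized} online algorithms: the adversarial item ordering we build will itself be randomized (which is legitimate against an oblivious adversary), and the distinguisher we obtain at the end will be a randomized polynomial-time procedure with one-sided error, i.e.\ an $RP$ algorithm.

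From an instance $(U,\mathcal{F},k)$ I would construct an online welfare-maximization instance with coverage valuations in which two sources of hardness are superimposed. The first is an \emph{online} obstruction modeled on the lower bound for online bipartite matching: items arrive in a long ``staircase'' of waves, the items of an early wave being broadly useful (relevant to many agents) while their far-sighted use is to be saved for one particular agent that is revealed only later, and the staircase is cut off after a random number of waves, forcing the algorithm to be near-competitive on every prefix and hence to commit items before it can tell which agents will ultimately need them. Against a \emph{deterministic} algorithm this alone caps the competitive ratio at $1/2$, as in the classical construction; ordinarily randomization would let an online algorithm ``spread'' its commitments and climb back towards $1-1/e$, exactly as in \cite{KVV90}. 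The second, \emph{computational} obstruction is engineered to block precisely this recovery: each agent is given a large pool of candidate items, one per set $F_j\in\mathcal{F}$, but the competition for items created by the staircase means that only about $k$ of them can actually be routed to that agent, and the welfare it then contributes equals the number of elements of its private (rescaled) copy of $U$ covered by those $\approx k$ sets. Choosing which $k$ candidates an agent should receive is thus a Max $k$-cover instance, which no polynomial-time algorithm can solve to within $1-1/e+\gamma$; moreover the items within each wave are presented in a uniformly random order (the adversary's coin) so that the good $k$-bundles cannot be recognized in advance. The net effect is that a polynomial-time online algorithm is driven back into the oblivious, $1/2$-competitive regime.

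The proof then splits into three parts. \emph{Completeness:} in the complete case the ``diagonal'' allocation — route to each wave-$r$ agent exactly the $k$ items covering its copy of $U$ — is feasible and attains a welfare $V$ that is a known function of the combinatorial skeleton alone, so $\mathrm{OPT}\ge V$ and a $(1/2+\delta)$-competitive algorithm attains expected welfare at least $(1/2+\delta)V$. \emph{Soundness:} by analyzing the staircase against the random ordering and the random stopping time, and invoking the $(1-1/e+\gamma)$ coverage cap of the deficient case, one shows that an online algorithm whose expected welfare exceeds $(1/2+\delta/2)V$ would, on the complete-case instance, necessarily be performing a near-optimal Max $k$-cover selection for some agent in polynomial time — which, by the deficient-case cap, is exactly what separates the two cases. \emph{Reduction:} given a purported polynomial-time $(1/2+\delta)$-competitive algorithm, run it on the constructed instance against the randomized ordering; by Completeness its expected welfare is at least $(1/2+\delta)V$ in the complete case, and, after boosting with independent repetitions and a Chernoff bound, one obtains an $RP$ distinguisher between the complete and deficient cases, forcing $NP=RP$.

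The main obstacle is the design and verification underlying the \emph{Soundness} step. One has to engineer the instance so that the online staircase loss (which alone drives the ratio to $1/2$) and Feige's coverage loss genuinely compound and cannot be evaded separately — in particular so that the bound holds for \emph{every} online algorithm, not merely for Greedy — while keeping $\mathrm{OPT}$ honest in the complete case and ensuring that the planted copies of $U$ attached to different agents cannot interact so as to leak recoverable welfare or be untangled by a clever randomized strategy. The natural tool for the ``every online algorithm'' part is Yao's principle: fix the hard distribution over orderings (the within-wave shuffles together with the random cutoff) and show that every deterministic polynomial-time online strategy has expected welfare at most $(1/2+\delta)\,\mathrm{OPT}$ against it. Getting the parameters — the number of waves, the number and sizes of the agents, the rescalings of the copies of $U$, and the precise staircase compatibility pattern — to line up so that this holds for every constant $\delta>0$ is the delicate heart of the argument.
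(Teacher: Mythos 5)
Your high-level plan is the right one and matches the paper's: superimpose the Karp--Vazirani--Vazirani online obstruction on Feige's Max $k$-cover hardness, use a randomized adversary (legitimate against an oblivious adversary), and extract a one-sided-error distinguisher to conclude $NP=RP$. But the proposal stops exactly where the proof has to happen, and two concrete gaps would sink the soundness step as you have set it up. First, the form of Feige's theorem you invoke is too weak. You only assume that in the deficient case every $k$ sets cover at most a $(1-1/e+\gamma)$-fraction of $U$. In the online instance the number of items an agent ends up with is a random variable that is not concentrated at $k$: early-deactivated agents get few items and late ones can get many more than $k$, and nothing in the construction ``caps'' an agent at $\approx k$ items as you assert. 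With only the $\ell=k$ guarantee, an agent receiving $2k$ items could cover its entire universe and the soundness bound collapses. What is actually needed (and what the paper uses, citing the refinements in Feige--Tetali and Feige--Vondr\'ak) is the fine-grained profile: any $\ell\le c_0k$ sets cover at most a $(1-(1-1/k)^\ell+\epsilon)$-fraction. One then bounds the per-agent value by a concave function $\nu(\ell)\le(1-e^{-\ell/k}+\epsilon')|U|$ of the (random) number of items $\ell$, extends it to a concave bound valid for all $\ell$ via a tangent line capped at $|U|$, and applies Jensen's inequality against the expected item count $\mu\le k\ln\frac{t}{t-j}$ for the agents deactivated after stage $j$. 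Summing $(j/t+\epsilon')$ over $j$ is what produces the $1/2$; none of this computation appears in your sketch.

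Second, the construction is underspecified in two ways that matter. (i) For completeness you need a YES-case allocation in which \emph{all} agents of a stage are simultaneously fully covered by \emph{disjoint} item sets, even though the agents share the same arriving items; the paper achieves this by giving each agent the same set system under a cyclic shift of the item-to-set correspondence, so the optimal bundles $\{(j,\pi(j)-i\bmod n)\}$ are automatically disjoint. Your ``private rescaled copies of $U$'' do not by themselves guarantee this, and you flag the issue without resolving it. (ii) Your randomization is a ``random cutoff after a random number of waves,'' whereas the paper runs all $t$ stages and instead deactivates, after each stage, a uniformly random surviving copy of each agent; the key point is that this choice is independent of the algorithm's actions, so each active copy is equally likely to be the one that ``mattered,'' giving the $\frac{1}{t-i+1}$ bound that drives the whole calculation. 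A single random stopping time does not obviously yield the same per-stage accounting. Finally, a minor point on the reduction: Chernoff boosting of a two-sided-error distinguisher lands you in $BPP$, not $RP$; you would then need $NP\subseteq BPP\Rightarrow NP=RP$ as an extra step, whereas the paper gets one-sided error directly by answering YES exactly when some agent receives $\ell\le c_0k$ items whose value exceeds the deficient-case cap (an event that provably never occurs in the NO case).
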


Our main tool is Feige's hardness reduction, which proves the optimality of $(1-1/e)$-approximation for Max $k$-cover \cite{Feige98}. We also require some additional properties of this reduction, which have been described in \cite{FT04,FV10}. We summarize the properties that we need as follows:

\paragraph{Hardness of Max $k$-cover.} {\em
For any fixed $c_0>0$ and $\epsilon>0$, it is NP-hard to distinguish between the following two cases for a given collection of sets $\cS \subset 2^U$, partitioned into groups $\cS_1,\ldots,\cS_k$:
\begin{enumerate}
\item YES case: There are $k$ disjoint sets, $1$ from each group $\cS_i$, whose union is the universe $U$.
\item NO case: For any choice of $\ell \leq c_0 k$ sets, their union covers at most a $(1 - (1-1/k)^\ell + \epsilon)$-fraction of the elements of $U$.
\end{enumerate}
This holds even for set systems with the following properties:
\begin{itemize}
\item every set has the same (constant) size $s$; and
\item each group contains the same (constant) number of sets $n$.
\end{itemize}
}

As we show in Section~\ref{sec:offline}, this reduction also gives hard instances for welfare maximization with coverage valuations,
proving that any (offline) $(1-1/e+\delta)$-approximation would imply $P=NP$. In Section~\ref{sec:online}, we prove our result, Theorem~\ref{thm:online-coverage}.

\paragraph{Our approach.}
We produce instances of online welfare maximization by taking multiple copies of a hard instance $\cI$ for offline welfare maximization
and repeating them with certain (random) agents gradually dropping out of the system. We prove that an online algorithm faces two obstacles: it cannot solve the offline instance $\cI$ optimally (in fact it already loses a factor of $1-1/e$ there), and in addition it does not know in advance which agents will drop out at what time. A careful analysis of these two obstacles in combination gives the optimal hardness of $(1/2+\delta)$-approximation for online algorithms.

\subsection{Warm-up: hardness of offline welfare maximization}
\label{sec:offline}

First, we show how Feige's reduction implies the hardness $(1-1/e+\delta)$-approximation for welfare maximization with coverage valuations. This was previously proved by a more involved technique in \cite{KLMM08}. The result of \cite{KLMM08} has the additional property that it holds even when all agents have the same coverage valuation; in our reduction the valuations are different.

\paragraph{Reduction.} Consider a set system that forms a hard instance of Max $k$-cover as described above. We produce an instance of welfare maximization with $n$ agents and $m = kn$ items (where $n$ is the number of sets in each group, and $k$ is the number of groups). Each agent will have a valuation associated with this set system. However, the way items are associated with sets will be different for each agent. Let the $kn$ items be described by pairs $(j_1,j_2) \in [k] \times [n]$, and let the sets in the set system be denoted by $A_{j_1,j_2}$ where $(j_1,j_2) \in [k] \times [n]$. Then, the item $(j_1,j_2)$ for agent $i$ is associated with the set $A_{j_1,j_2+i \pmod n}$.
In other words, the value of a set of items $S$ for agent $i$ is
$$ w_i(S) = \Big| \bigcup_{(j_1,j_2) \in S} A_{j_1,{(j_2+i \bmod n)}} \Big|.$$

Now consider the two cases:
\begin{itemize}
\item YES case: There are $k$ sets, one from each group, covering the universe. Denote these sets by $A_{j,\pi(j)}$ for some function $\pi:[k] \rightarrow [n]$. Then, there is an allocation where agent $i$ receives the set of items $S_i = \{ (j, \pi(j)-i \bmod n): j \in [k] \}$.
Note that these sets of items are disjoint, due to the cyclic shift depending on $i$. Also, each agent is perfectly satisfied, since the union of the sets associated with her items is $\bigcup_{j \in [k]} A_{j, \pi(j)} = U$. Hence $w_i(S_i) = |U|$ for all $i$.
\item NO case: For each choice of $\ell \leq c_0 k$ sets, they cover at most a $(1 - (1-1/k)^\ell + \epsilon)$-fraction of the universe.
(We choose $c_0$ to be a large constant.) In other words, any agent who receives $\ell \leq c_0 k$ sets
gets value at most $w_i(S_i) \leq (1 - (1-1/k)^\ell + \epsilon) |U|$. Here, it does not matter who receives which items, as we have a bound depending solely on the number of items received. Since this bound is a concave function, the best possible welfare is achieved when each agent receives exactly $k$ items, and this yields welfare $(1 - (1-1/k)^k + \epsilon) |U|$ per agent. By choosing $k$ arbitrarily large and $\epsilon>0$ arbitrarily small, we obtain welfare arbitrarily close to $(1-1/e) |U|$ per agent.
\end{itemize}

In the following, we will use this hard instance of welfare maximization with coverage valuation as a black box.

\subsection{Hardness of online welfare maximization}
\label{sec:online}

Here we prove Theorem~\ref{thm:online-coverage}. We produce a reduction from Max $k$-cover to online welfare maximization as follows.

\paragraph{The hard online instances.}
Let $\cI$ be an instance of welfare maximization with coverage valuations, obtained from a hard instance of Max $k$-cover (as in Section~\ref{sec:offline}), with $n$ agents and $m = kn$ items. For a parameter $t \geq 1$, we produce the following instance $\cI^{(t)}$ of online welfare maximization, with $tn$ agents and $tm$ items, proceeding in $t$ stages:
\begin{itemize}
\item In the first stage, we have $t$ copies of each agent of the instance $\cI$, with exactly the same valuation function.
The valuation function for each agent is determined by the set system of $\cI$. The $m$ items of instance $\cI$ arrive in an arbitrary order.
\item After each stage, one copy of each agent is effectively ``deactivated", in the sense that all subsequent items have zero value for her.
The copy of each agent that disappears is chosen by an adversary.
\item In stage $t' \in \{1,\ldots,t\}$, we have $(t-t'+1)n$ ``active agents" remaining, who are still interested in the remaining items.
In each stage, $m$ items of the original instance arrive in an arbitrary order, but now they are valuable only for the remaining active agents. For these agents, the items are effectively copies of the items that arrived in previous stages, and they are represented by the same sets.
\end{itemize}

These instances were inspired by the $1-1/e$ lower bound for online matching \cite{KVV90}. Essentially, we take an instance of bipartite matching that is hard for online algorithms and expand each incoming vertex into an entire instance of welfare maximization with coverage valuations, to impose the additional difficulty of approximating an APX-hard problem at each stage.

We analyze this instance in a series of claims.

\begin{claim}
\label{cl:opt}
The offline optimum in the YES case is $t n |U|$.
\end{claim}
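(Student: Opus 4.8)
The plan is to exhibit an explicit allocation achieving welfare $tn|U|$ in the YES case, and then to observe that this is also an upper bound since no agent can ever exceed value $|U|$. For the construction, recall that in the YES case of the underlying Max $k$-cover instance there is a function $\pi:[k]\to[n]$ such that $\bigcup_{j\in[k]} A_{j,\pi(j)} = U$. As in the offline reduction of Section~\ref{sec:offline}, within a single copy of $\cI$ the agents can be indexed by $[n]$ so that agent $i$ receives the items $S_i = \{(j,\pi(j)-i \bmod n): j\in[k]\}$, making these $n$ item-sets pairwise disjoint (the cyclic shift in the second coordinate guarantees this) and giving each agent value $|U|$.

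Now I would extend this stage by stage. In stage $t'$, exactly $(t-t'+1)n$ agents are active; $m=kn$ fresh items arrive, which for every active agent are (as sets) copies of the original $kn$ items. I would restrict attention to any fixed set of $n$ active agents and allocate the $m$ items of this stage among those $n$ agents exactly according to the offline YES-case allocation above, so that each of these $n$ agents collects an additional value-$|U|$ bundle in this stage; the remaining active agents of this stage get no new items. Doing this for $t' = 1,\ldots,t$, and choosing in each stage a group of $n$ agents that has not yet been ``used'' (possible because after stage $t'$ the adversary deactivates one copy of each original agent, so exactly $n$ agents leave and the $n$ we just satisfied can be taken among those), we satisfy all $tn$ agents, each exactly once, with a value-$|U|$ bundle. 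Since items in different stages are distinct and within a stage the $n$ bundles are disjoint, the whole allocation is feasible, and its welfare is $tn|U|$.

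For the matching upper bound, note that each agent's valuation is a coverage valuation over the same universe $U$, so $w_i(S) \le |U|$ for every agent $i$ and every bundle $S$; hence any allocation has welfare at most $tn|U|$. Combining the two directions gives that the offline optimum in the YES case is exactly $tn|U|$. I expect the only real point requiring care is the bookkeeping of which $n$ agents are satisfied in which stage — one must make sure the adversarial deactivation schedule (one copy of each original agent per stage) is consistent with always finding $n$ not-yet-satisfied active agents to receive the full YES-case allocation — but since exactly $n$ agents leave after each stage, assigning the satisfied group in stage $t'$ to be (a superset within) the group that leaves after stage $t'$ makes this immediate.
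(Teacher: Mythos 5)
Your proof is correct and takes essentially the same route as the paper: allocate the items of stage $j$ to the $n$ agents deactivated at the end of that stage, who form exactly one copy of each original agent, so each collects value $|U|$ via the YES-case cover, giving $tn|U|$ in total. The one point needing care --- which your final paragraph does address --- is that the $n$ agents served in a stage cannot be \emph{any} $n$ active agents but must be one copy of each original agent, since the disjointness of the bundles $S_i$ relies on the cyclic shift over distinct indices $i$.
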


\begin{proof}
The offline optimum allocates all items in stage $j$ to those agents who will be deactivated at the end of this stage.
Since these are $n$ agents whose valuations of the items of this stage correspond exactly to the instance $\cI$,
in the YES case they can obtain optimal value $n |U|$ (since every agent can cover the universe $U$).
Adding up over all stages, the total value collected by all agents is $t n |U|$.
\end{proof}

\begin{claim}
\label{cl:bound}
Let the adversary choose a copy of each agent to be deactivated after each stage independently and uniformly at random from the remaining active copies. Then the expected total number of items allocated to the agents deactivated at the end of stage $j$ is at most $m \ln \frac{t}{t-j}$.
\end{claim}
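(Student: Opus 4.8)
The plan is to account for these items stage by stage and to exploit the symmetry of the random deactivation process. Let $D_j$ denote the set of $n$ agents removed at the end of stage $j$ (one copy of each of the $n$ original agents), and note that, since every such agent has positive value only for the items of stages $1,\dots,j$, we may assume without loss of generality that the algorithm assigns items to agents of $D_j$ only during those stages. For $t'\in\{1,\dots,j\}$ let $Z_{t'}$ be the number of the $m$ items arriving in stage $t'$ that the algorithm allocates to an agent of $D_j$; it then suffices to show $\E[Z_{t'}]\le m/(t-t'+1)$ and to sum over $t'$.

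Fix a stage $t'$. At the start of stage $t'$ each original agent $a$ has exactly $r:=t-t'+1$ active copies, and exactly one of them --- call it $C_a$ --- survives the deactivations at the ends of stages $t',\dots,j-1$ and is removed at the end of stage $j$; this $C_a$ is the unique copy of $a$ belonging to $D_j$. The key point is a symmetry observation: condition on the full history up to the start of stage $t'$, i.e.\ on the algorithm's random bits, the (adversarial) item order, and the copies deactivated at the ends of stages $1,\dots,t'-1$. Because the deactivations are chosen uniformly at random and independently of the algorithm, and because removing one uniform copy at each step is the same as fixing a uniformly random order of the copies and removing along it, $C_a$ is, conditionally on this history, uniformly distributed over the $r$ currently active copies of $a$. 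At the same time, the algorithm has no information about future deactivations, so its allocation in stage $t'$ is a function of this history alone; write $y_{a,c}$ for the number of stage-$t'$ items it gives to copy $c$ of agent $a$, so that $\sum_{a}\sum_{c=1}^{r} y_{a,c}\le m$.

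Consequently $\E[Z_{t'}\mid\text{history}]=\sum_a\E[y_{a,C_a}\mid\text{history}]=\frac1r\sum_a\sum_{c=1}^{r}y_{a,c}\le\frac mr$, and taking expectations gives $\E[Z_{t'}]\le m/(t-t'+1)$. Summing over $t'$ and bounding the harmonic sum by an integral,
$$\E\Big[\sum_{t'=1}^{j}Z_{t'}\Big]\le\sum_{t'=1}^{j}\frac{m}{t-t'+1}=m\sum_{s=t-j+1}^{t}\frac1s\le m\int_{t-j}^{t}\frac{dx}{x}=m\ln\frac{t}{t-j},$$
which is the desired bound.

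I expect the one genuinely delicate point to be the middle paragraph: one must set up the conditioning so that, simultaneously, the stage-$t'$ allocation is frozen while $C_a$ remains uniform on the $r$ active copies, and be careful that this reasoning is valid even against a randomized algorithm (here the ``adversary'' is the prescribed uniform-random deactivation, which is oblivious to the algorithm's coins). Once that symmetry is in place, the rest is the elementary harmonic-series estimate above.
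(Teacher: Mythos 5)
Your proof is correct and follows essentially the same route as the paper's: condition on the history up to stage $t'$, use the independence of future uniform deactivations from the algorithm's stage-$t'$ allocation to conclude each of the $t-t'+1$ active copies of an agent is equally likely to be the one landing in $D_j$, deduce an expected contribution of $m/(t-t'+1)$ per stage, and sum via the integral bound. Your write-up just makes the conditioning and the symmetry of the removal order more explicit than the paper does.
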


\begin{proof}
Let $A_j$ denote the agents deactivated right after stage $j$; $A_j$ contains exactly 1 copy of each agent. Consider $i \leq j$ and condition on the set of agents active in stage $i$. The choice of which agents will appear in $A_j$ will be made after stage $j$, independently of what the algorithm does in stage $i$. Since the choice of $A_j$ is uniform in each stage $j$, each of the $t-i+1$ copies of a given agent active in stage $i$ has the same probability $(\frac{1}{t-i+1})$ of appearing in $A_j$.
The number of items allocated in each stage is $m$, hence the expected number of items allocated to $A_j$ in stage $i$ is $\frac{m}{t-i+1}$. By linearity of expectation, the number of items allocated to $A_j$ between stages $1$ and $j$ is
\begin{equation*}
 \sum_{i=1}^{j} \frac{m}{t-i+1} \leq \int_0^j \frac{m}{t-x} dx = m \ln \frac{t}{t-j}. \qedhere
\end{equation*}
\end{proof}

\begin{claim}
\label{cl:stage-j}
For every $\epsilon'>0$, there are $\epsilon,c_0>0$ and a constant lower bound on $k$ (parameters of the Max $k$-cover reduction) such that for every $j \leq (1-\epsilon') t$, the expected value collected in the NO case by the agents deactivated at the end of stage $j$ is at most $(j/t + \epsilon') n |U|$.
\end{claim}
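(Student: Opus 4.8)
The plan is to bound the value obtained by each deactivated agent solely in terms of the number of items it receives, and then use concavity together with Claim~\ref{cl:bound}. Write $A_j$ for the set of $n$ agents deactivated at the end of stage $j$ (one copy of each base agent), and for $i\in A_j$ let $\ell_i$ be the number of items allocated to $i$ during stages $1,\dots,j$; items of later stages have value $0$ for $i$ and may be ignored. Since in every stage the items are represented by the same $kn$ sets $A_{j_1,j_2+i\bmod n}$, the items received by $i$ realize at most $\ell_i$ distinct sets of the Max $k$-cover instance, so in the NO case $w_i(S_i)\le(1-(1-1/k)^{\ell_i}+\epsilon)|U|$ whenever $\ell_i\le c_0k$ (using monotonicity of $\ell\mapsto(1-1/k)^{\ell}$ to pass from the actual number of distinct sets to $\ell_i$), and in any case $w_i(S_i)\le|U|$. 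Choosing $c_0\ge\ln(1/\epsilon)$ forces $(1-1/k)^{c_0k}\le e^{-c_0}\le\epsilon$, so for an agent receiving more than $c_0k$ items the first bound already has coefficient $\ge1$; hence in all cases
\[
  w_i(S_i)\ \le\ g(\ell_i)\,|U|,
  \qquad
  g(\ell):=\min\bigl\{\,1,\ 1-(1-1/k)^{\ell}+\epsilon\,\bigr\},
\]
where $g$ is nondecreasing and concave on $[0,\infty)$.

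Next I would average over $A_j$ and apply Jensen's inequality twice. Summing the last display over the $n$ agents of $A_j$ and applying concavity of $g$ to their average gives $\sum_{i\in A_j}w_i(S_i)\le n|U|\,g\bigl(\tfrac1n\sum_{i\in A_j}\ell_i\bigr)$. Taking expectations (over the algorithm's coins and the random deactivations of Claim~\ref{cl:bound}), using concavity of $g$ once more, and then Claim~\ref{cl:bound} together with $m=kn$ and monotonicity of $g$,
\[
  \E\Bigl[\,\sum_{i\in A_j}w_i(S_i)\,\Bigr]
  \ \le\ n|U|\,g\!\left(\tfrac1n\,\E\Bigl[\,\sum_{i\in A_j}\ell_i\,\Bigr]\right)
  \ \le\ n|U|\,g\!\left(\tfrac{m}{n}\ln\tfrac{t}{t-j}\right)
  \ =\ n|U|\,g\!\left(k\ln\tfrac{t}{t-j}\right).
\]

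It remains to estimate $g(k\ln\tfrac{t}{t-j})$. Using $(1-1/k)^k\ge e^{-1}(1-\tfrac1{k-1})$, raising to the power $\ln\tfrac{t}{t-j}\ge0$, and noting that $j\le(1-\epsilon')t$ implies $\ln\tfrac{t}{t-j}\le\ln(1/\epsilon')$,
\[
  (1-1/k)^{k\ln\frac{t}{t-j}}
  \ \ge\ \tfrac{t-j}{t}\,\bigl(1-\tfrac1{k-1}\bigr)^{\ln\frac{t}{t-j}}
  \ \ge\ \tfrac{t-j}{t}\,\Bigl(1-\tfrac{\max\{1,\ln(1/\epsilon')\}}{k-1}\Bigr),
\]
so $g(k\ln\tfrac{t}{t-j})\le\frac{j}{t}+\frac{\max\{1,\ln(1/\epsilon')\}}{k-1}+\epsilon$. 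Finally I fix $\epsilon:=\epsilon'/2$ (which determines $c_0:=\ln(1/\epsilon)$) and take $k$ large enough that $\frac{\max\{1,\ln(1/\epsilon')\}}{k-1}\le\epsilon'/2$; then $\E[\sum_{i\in A_j}w_i(S_i)]\le(j/t+\epsilon')\,n|U|$, which is the claim, and $\epsilon,c_0,k$ are constants depending only on $\epsilon'$.

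The only genuinely delicate point is the first step: the bound $w_i(S_i)\le g(\ell_i)|U|$ must hold for \emph{every} deactivated agent at once, including any that happens to receive a large number of items, which is precisely why $c_0$ must be coupled to $\epsilon$ and why it is harmless that the Max $k$-cover guarantee only controls collections of at most $c_0k$ sets. Everything after that is two applications of Jensen's inequality plus the elementary fact that $(1-1/k)^k\to1/e$.
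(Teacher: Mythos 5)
Your proof is correct and follows essentially the same route as the paper: bound each agent's value by a nondecreasing concave function of the number of items it receives, then apply Jensen twice together with Claim~\ref{cl:bound} and the estimate $(1-1/k)^{k\ln\frac{t}{t-j}} \approx \frac{t-j}{t}$. The only divergence is technical: where the paper extends the NO-case bound beyond $\ell = c_0 k$ by linearizing $\phi$ at $\mu$ and capping with the trivial bound $|U|$ (checking $c_0 k \geq \mu + k$), you cap directly with $g(\ell)=\min\{1,\,1-(1-1/k)^{\ell}+\epsilon\}$ and couple $c_0 \geq \ln(1/\epsilon)$ so that the trivial bound is the binding one for $\ell > c_0 k$ --- a slightly cleaner way of preserving concavity that buys the same conclusion.
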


\begin{proof}
Denote again by $A_j$ the agents deactivated at the end of stage $j$.
By Claim~\ref{cl:bound}, the expected number of items allocated to $A_j$ is at most $m \ln \frac{t}{t-j}$.
Let $\mu$ denote the expected number of items allocated per agent in $A_j$: 
we get $\mu \leq \frac{m}{n} \ln \frac{t}{t-j} = k \ln \frac{t}{t-j}$.
Assuming that $j \leq (1-\epsilon') t$, we have $\mu \leq k \ln \frac{1}{\epsilon'}$. Let us set  $c_0 = 1 + \ln \frac{1}{\epsilon'}$,
and let us denote by $\nu(\ell)$ the largest value that an agent can possibly obtain from $\ell$ items.
By properties of the NO case, we know that for $\ell \leq c_0 k$, we have $\nu(\ell) \leq (1 - (1-1/k)^\ell + \epsilon) |U|$. For $\epsilon = \frac12 \epsilon'$ and $k$ lower-bounded by some sufficiently large constant, we can replace this bound by $\nu(\ell) \leq (1 - e^{-\ell / k} + \epsilon') |U|$.

A technical point here is that this bound holds only for $\ell \leq c_0 k$, while the actual number of allocated items is random and could be much larger. However, we can deal with this issue as follows. Let us define $\phi(x) = (1 - e^{-x / k} + \epsilon') |U|$. The derivative of $\phi$ at $\mu$ is $\phi'(\mu) = \frac{1}{k} e^{-\mu/k} |U|$. Therefore, since the function $\phi(x)$ is concave, we have $\phi(x) \leq \phi(\mu) + \phi'(\mu) (x-\mu)$, for $\ell \in [0, c_0 k]$. Thus we obtain a (weaker) linear bound: 
\begin{equation*}
\nu(\ell) \leq \phi(\mu) + \phi'(\mu) (\ell-\mu)
= (1 - e^{-\mu/k} + \epsilon' + \frac{1}{k} e^{-\mu/k} (\ell-\mu)) |U|.
\end{equation*} 
Furthermore, we always have the trivial bound $\nu(\ell) \leq |U|$, for any $\ell$. This bound is anyway stronger than the one above for $\ell \geq c_0 k$, because we have $c_0 k \geq \mu + k$. Therefore, we obtain the following bound for all $\ell \geq 0$:
$$ \nu(\ell) \leq \min \{ |U|, (1 - e^{-\mu/k} + \epsilon' + \frac{1}{k} e^{-\mu/k} (\ell - \mu)) |U|\} ,$$
and this (piecewise linear) bound is still concave. Since the expected number of items per player is $\E[\ell] = \mu$, the worst case is that each agent in $A_j$ indeed receives $\mu$ items (deterministically), and her value is $\nu(\mu) \leq (1 - e^{-\mu/k} + \epsilon') |U|$.
Using our bound $\mu \leq k \ln \frac{t}{t-j}$, we obtain that the expected value
collected per agent in $A_j$ is at most $(1 - \frac{t-j}{t} + \epsilon') |U| = (\frac{j}{t} + \epsilon') |U|$.
\end{proof}

\begin{proof}[Theorem~\ref{thm:online-coverage}]
Let us assume now that there is a $(\frac12 + \delta)$-competitive algorithm for online welfare maximization with coverage valuations. We set $\epsilon' = \delta / 4$ and the parameters $c_0, \epsilon$ accordingly to this value of $\epsilon'$ (see Claim~\ref{cl:stage-j}).  Given an instance $\cI$ of Max $k$-cover, we can also assume that $k$ is sufficiently large as required by Claim~\ref{cl:stage-j};
otherwise all parameters of the Max $k$-cover instance are constant, and we can solve it by exhaustive search. If $k$ is sufficiently large, we run the presumed online algorithm on the random instance $\cI^{(t)}$ that we constructed above.

In the NO case, denote by $V_j$ the expected value collected by agents deactivated after stage $j$. By Claim~\ref{cl:stage-j}, we have $V_j \leq (\frac{j}{t} + \epsilon') n |U|$ for $j \leq (1-\epsilon') t$. The value collected by the agents deactivated in each of the last $\epsilon' t$ stages is $V_j \leq n |U|$, because every agent can possibly get value at most $|U|$. 
Adding up the values of agents over all stages, we obtain that the online algorithm returns a solution of expected value
\begin{equation*}
\sum_{j=1}^{t} V_j \leq \sum_{j=1}^{(1-\epsilon')t} \left(\frac{j}{t} + \epsilon'\right) n |U| + \epsilon' t n |U|
 \leq \frac{t}{2} n |U| + 2 \epsilon' t n |U| = \left(\frac12+2\epsilon'\right) t n |U|.
 \end{equation*}
In contrast, the offline optimum in the YES case is $t n |U|$ (by Claim~\ref{cl:opt}) and hence the $(\frac12+\delta)$-competitive algorithm must return expected value at least $(\frac12 + \delta) t n |U| = (\frac12 + 4 \epsilon') t n |U|$, a constant fraction better than the NO case. Since the possible values returned by the algorithm are in the range $[0, t n |U|]$, we can distinguish the two cases with constant two-sided error.

In fact, we can make the error one-sided as follows. If some agent receives $\ell \leq c_0 k$ items ($c_0$ as in the proof of Claim~\ref{cl:stage-j}) whose value is more than $(1 - (1-1/k)^\ell + \epsilon) |U|$, we answer YES, otherwise we answer NO. Note that by the proof of Claim~\ref{cl:stage-j}, in the YES case, we will answer YES with probability $\Omega(1)$, because otherwise the solution is almost always bounded by the same analysis as in the NO case, and the expected value of the solution would be less than $(\frac12 + 4 \epsilon') t n |U|$, which cannot be the case. In the NO case, we always answer NO, because there are no $\ell \leq c_0 k$ items of value more than $(1 - (1-1/k)^\ell + \epsilon) |U|$. Thus we can solve the Max $k$-cover decision problem with constant one-sided error, which implies $NP=RP$.
\end{proof}

\newcommand{\expect}{{\bf \mbox{\bf E}}}

\section{Online budget-additive allocation}
\label{sec:budget-additive}
In this section we prove that no online algorithm can obtain a better than $0.612$ approximation with respect to the standard LP in the budget-additive case. We now define the budgeted allocation problem\cite{CG08}.
\begin{definition}
Let $Q$ be a set of $m$ indivisible items and $A$ a set of $n$ agents, respectively,
where agent $a$ is willing to pay $b_{ai}$ for item $i$. Each agent $a$ has a budget constraint
$B_a$, and on receiving a set $S\subseteq Q$ of items pays $\min \{B_a, \sum_{i\in S} b_{ai} \}$. 
An allocation $\Gamma:A \to 2^{Q}$  is a partitioning of the items $Q$ into disjoint subsets $\Gamma(1),\ldots, \Gamma(n)$.
The maximum budgeted allocation problem, or simply MBA, is to find the allocation
which maximizes the total revenue, that is
$\sum_{a\in A}\min\left\{B_a,\sum_{i\in \Gamma(a)}b_{ai}\right\}$.
\end{definition}
Note that one can assume without loss of generality that $b_{ai}\leq B_a$, $\forall a\in A, i\in Q$.
Indeed, if bids are larger than budget, decreasing them to the budget does not
change the value of any allocation.
We now introduce the standard LP relaxation of the maximum budgeted  allocation problem \cite{CG08}:
\begin{equation}\label{eq:lp}
\begin{split}
& \max \sum_{a\in A, i\in Q} b_{ai} x_{ai}: \\
&\forall a\in A, \sum_{i\in Q} b_{ai}x_{ai}\leq B_a; \\
& \forall i\in Q, \sum_{a\in A} x_{ai}\leq 1;\\
&\forall a\in A, i\in Q, x_{ai}\geq 0.
\end{split}
\end{equation}

It was shown in \cite{CG08} that the integrality gap of LP \eqref{eq:lp} is exactly $3/4$. We now show that no online algorithm can obtain value better than a factor $0.612$ of this LP. Thus, if a $(1-1/e)$-competitive algorithm exists, it has to use other techniques, perhaps a stronger LP relaxation.

Our basic building block will be an instance with agents $A=\{a_1, a_2\}$ with budgets $B_{a_1}=B_{a_2}=3$ and items $I=\{i_1,i_2, i_3\}$ such that $b_{a_j, i_k}=2$ for all $j=1,2$ and $k=1, 2, 3$.  Note that the value of  the standard LP on this instance is $6$, while the maximum allocation is $5$ since an agent that gets two items can only pay $1$ for the second item that is allocated to him.
	
We now use the small instance that we just described to construct an online instance similarly to Section~\ref{sec:online-coverage}. Denote the set of agents in the system by $A$. We will have $|A|=2t$ and $B_a=3$ for all $a\in A$. Items will arrive in $t$ stages. It will be convenient to use a partition $A=\bigcup_{s=1}^t A^{(s)}$ of $A$ into disjoint sets of size $2$. The agents will gradually drop out of the system, i.e., agents who drop out at time $j$ will not be interested in items that arrive after $j$.  As before, for each $j=1,\ldots,t$ we refer to the set of agents that did not drop out before time $j$ as {\em active at time $j$}, and refer to the other sets of agents as {\em deactivated} at time $j$. Initially all sets $A^{(s)}, s=1,\ldots, t$ are active. After each stage, $A^{(s)}$ for $s$ uniformly random among the remaining active sets is deactivated. We denote the set deactivated after stage $j$ by $A_{j}$.

In each stage $j=1,\ldots, t$ a set of items $I_{j}$ arrives, where $|I_{j}|=3$ and $b_{ai}=2$ for all $a\in A$ that are active in stage $j$. Note that the value of the standard LP for our instance is $3t$: for each $j=1,\ldots, t$ allocate $2/3$ of each item in $I_j$ to each agent in the set $A_j$.
 
We now upper bound the value of any allocation that an online algorithm can obtain. 
Let 
\begin{equation}\label{def-g}
g(x)=\left\{
\begin{array}{cc}
2x,&\text{~if~}x\leq 1\\
x+1&\text{~if~}1\leq x\leq 2\\
3&\text{o.w.}
\end{array}
\right.
\end{equation}
 We first prove:
\begin{claim}\label{cl:g}
Let $a\in A$ denote an agent and let $X$ denote the (random) number of items allocated to $a$.
Then the expected value obtained by $a$ for these items is upper bounded by $g(\expect[X])$, where $g(\cdot)$ is given by \eqref{def-g}.
\end{claim}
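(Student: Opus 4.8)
The plan is to reduce the claim to a single application of Jensen's inequality, once we identify $g$ as the (restriction to the nonnegative integers of the) concave envelope of the function giving an agent's value as a function of the number of items she receives. First I would observe that every bid in this instance equals $2$ (for an item arriving while the agent is active) or $0$ (for an item arriving after the agent has been deactivated), so an agent allocated $\ell$ items in total obtains value exactly $\min\{3,\,2\ell'\}$, where $\ell' \le \ell$ is the number of those items that arrived while she was still active. Since $x \mapsto \min\{3, 2x\}$ is nondecreasing, this value is at most $h(\ell) := \min\{3, 2\ell\}$. Hence, conditioned on $X = \ell$, the value collected by $a$ is at most $h(\ell)$, and therefore the expected value collected by $a$ is at most $\E[h(X)]$.

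Next I would verify the elementary pointwise inequality $h(\ell) \le g(\ell)$ for every integer $\ell \ge 0$. This is immediate from the definition \eqref{def-g}: $h(0) = 0 = g(0)$, $h(1) = 2 = g(1)$, $h(2) = 3 = g(2)$, and $h(\ell) = 3 = g(\ell)$ for every $\ell \ge 2$. In fact $g$ is precisely the upper concave envelope of the points $(0,0),(1,2),(2,3)$ together with the constant tail, and being continuous and piecewise linear with nonincreasing slopes $2 \ge 1 \ge 0$, it is concave on $[0,\infty)$.

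Finally I would combine the pieces: the expected value collected by $a$ is at most $\E[h(X)] \le \E[g(X)] \le g(\E[X])$, where the first inequality uses $h \le g$ on the support of $X$ (the nonnegative integers), and the second is Jensen's inequality applied to the concave function $g$. This is exactly the asserted bound.

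There is no genuine obstacle here; the only points needing a moment's care are the bookkeeping that $a$'s realized value is bounded by $h(X)$ rather than equal to it — items allocated to $a$ after her deactivation contribute $0$ — and the fact that it suffices for $g$ to dominate $h$ only at integer arguments, since $X$ is integer-valued and $g$ is nondecreasing.
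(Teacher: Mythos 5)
Your proof is correct and follows essentially the same route as the paper's: both arguments bound the agent's value by $\min\{3,2\ell\}$ as a function of the number $\ell$ of items received and then exploit concavity of $g$ (the paper by exhibiting, for each value of $\E[X]$, the extremal distribution achieving the concave envelope; you by applying Jensen to the concave majorant $g$ of $h$ on the integers). Your version is, if anything, a slightly more careful writeup of the same idea, and you correctly flag the one subtle point that $g$ dominates $h(x)=\min\{3,2x\}$ only at integer arguments, which suffices because $X$ is integer-valued.
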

\begin{proof}
This follows from concavity of $\min\left\{B_a, \sum_{i\in \Gamma(a)} b_{ai}\right\}$. Let $x=\expect[X]$. Then if $x<1$, the maximum is achieved if exactly one item is allocated to $a$ with probability $x$, yielding value $2x$. If $1\leq x\leq 2$, then 
the maximum is achieved if $1$ item is always allocated to $a$ at the price of $2$, and then a second item is allocated with probability $x$ at the price of $1$, yielding value $2+(x-1)=x+1$. Otherwise if $x\geq 2$, the payoff cannot be larger than the budget of $a$, i.e.~$3$.
\end{proof}

We can now prove:
\begin{theorem}
No online (randomized) algorithm for the budgeted allocation problem can achieve (in expectation) more than a $0.612$-fraction of the optimal value of the linear program (\ref{eq:lp}).
\end{theorem}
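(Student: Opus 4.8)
The plan is to reproduce, in the budget-additive setting, the two-pronged analysis of Section~\ref{sec:online}: the information loss from not knowing which pair $A_j$ will be deactivated, combined with the local inefficiency captured by Claim~\ref{cl:g}. First I would record that the optimum of LP~\eqref{eq:lp} on the instance constructed above is $6t$, for every realization of the deactivations: assigning to each of the two agents of $A_j$ a fraction $\tfrac12$ of each of the three items of $I_j$ (and nothing else) is feasible, since for such an agent $\sum_{i\in I_j} b_{ai}x_{ai}=3\cdot 2\cdot\tfrac12=3=B_a$ and for each item $\sum_{a\in A_j}x_{ai}=\tfrac12+\tfrac12=1$, and it contributes value $6$ per stage; the matching upper bound is immediate since each agent contributes at most $B_a=3$ to the objective. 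Hence it suffices to prove that every online randomized algorithm collects expected value at most $\bigl(1-\tfrac12 e^{-2/3}-\tfrac12 e^{-4/3}+o(1)\bigr)\cdot 6t$ as $t\to\infty$, and $1-\tfrac12 e^{-2/3}-\tfrac12 e^{-4/3}\approx 0.6115<0.612$.

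Next, fixing such an algorithm, I would let $\mu_j$ denote the expected number of items it allocates to the pair $A_j$ during stages $1,\dots,j$ (items reaching $A_j$ in later stages carry zero value and can be ignored). Here I would argue exactly as in the proof of Claim~\ref{cl:bound}, now with $3$ items per stage and active sets of size $2$: conditioning on the collection of pairs active in a stage $i\le j$ and on whatever the algorithm does with the three items of $I_i$, the pair that eventually becomes $A_j$ is uniformly distributed among the $t-i+1$ active pairs and independent of the algorithm's history, so the expected number of items of $I_i$ received by $A_j$ equals $3/(t-i+1)$; summing over $i$ yields $\mu_j=\sum_{i=1}^{j}\frac{3}{t-i+1}\le 3\ln\frac{t}{t-j}$. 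Then, writing $A_j=\{a,a'\}$ with expected item counts $\expect[X_a]+\expect[X_{a'}]=\mu_j$, Claim~\ref{cl:g} together with concavity of the function $g$ of \eqref{def-g} gives
\[ V_j \;:=\; \expect[\text{value of }a]+\expect[\text{value of }a'] \;\le\; g(\expect[X_a])+g(\expect[X_{a'}]) \;\le\; 2\,g\!\left(\tfrac{\mu_j}{2}\right)\;=:\;h(\mu_j), \]
where $h(y)=2y$ for $y\le 2$, $h(y)=y+2$ for $2\le y\le 4$, and $h(y)=6$ for $y\ge 4$. As $h$ is nondecreasing we get $V_j\le h\bigl(3\ln\tfrac{t}{t-j}\bigr)$, and summing over the $t$ stages the algorithm's expected value is at most $\sum_{j=1}^{t} h\bigl(3\ln\tfrac{t}{t-j}\bigr)$; since $h\le 6$ and $j\mapsto h\bigl(3\ln\tfrac{t}{t-j}\bigr)$ is nondecreasing, this is at most $6+t\int_0^1 h\bigl(3\ln\tfrac{1}{1-u}\bigr)\,du$.

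It then remains to evaluate the integral. Putting $\phi(u)=3\ln\tfrac{1}{1-u}$, one has $\phi(u)=2$ at $u=1-e^{-2/3}$ and $\phi(u)=4$ at $u=1-e^{-4/3}$, so the integrand equals $-6\ln(1-u)$ on $[0,1-e^{-2/3}]$, equals $-3\ln(1-u)+2$ on $[1-e^{-2/3},1-e^{-4/3}]$, and equals $6$ on $[1-e^{-4/3},1]$; integrating $\ln(1-u)$ piecewise gives $\int_0^1 h(\phi(u))\,du=6-3e^{-2/3}-3e^{-4/3}$. Hence the algorithm's expected value is at most $(6-3e^{-2/3}-3e^{-4/3})\,t+6$, and dividing by the LP optimum $6t$ gives a ratio of at most $1-\tfrac12 e^{-2/3}-\tfrac12 e^{-4/3}+\tfrac1t$, which is below $0.612$ once $t$ is a sufficiently large absolute constant; this proves the theorem.

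I expect the only delicate step to be the deactivation bound on $\mu_j$: one must check, exactly as in Claim~\ref{cl:bound}, that the identity of $A_j$ is determined by a symmetric random process that is independent of all of the algorithm's past allocation decisions, so that $\mu_j$ takes the stated value no matter how adaptively the algorithm plays. Everything else is Jensen's inequality applied to the concave per-agent valuation (Claim~\ref{cl:g}) together with a routine one-variable integral; in contrast to Section~\ref{sec:online} there is no need to treat the final stages separately (the trivial budget bound $h\le 6$ already handles large $j$), and no complexity-theoretic assumption enters, this being an integrality-gap-type obstruction rather than a hardness reduction.
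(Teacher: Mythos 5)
Your proof is correct and follows essentially the same route as the paper: the same staged instance with random deactivation of pairs, the same expected-item-count bound $3\ln\frac{t}{t-j}$ per deactivated pair, Jensen's inequality via Claim~\ref{cl:g} and the concavity of $g$, and the same piecewise integral evaluating to the ratio $1-\tfrac12 e^{-2/3}-\tfrac12 e^{-4/3}\approx 0.6115<0.612$. Your normalization of the LP optimum as $6t$ (rather than the $3t$ stated in the paper) is the correct one; the paper's final integrand is correspondingly scaled by $\tfrac12$, so the two computations yield the identical ratio.
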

\begin{proof}
We first upper bound the expected number of items allocated to agents $A_j$ (recall that $A_j$ is the set of agents deactivated after stage $j$). Let $X^1_{j}, X^2_j$ denote the (random) number of items allocated to the two agents in $A_j$. By the same argument as in the proof of Claim~\ref{cl:bound}, which we do not repeat here, we have that each agent that is active at time $i=1,\ldots, j$ appears in $A_j$ with probability $\frac1{t-i+1}$. Since three items arrive in each stage, we have 
\begin{equation*}
\expect[X^1_j+X^2_j] \leq \sum_{i=1}^j\frac{3}{t-i+1}\leq 3\int_0^j\frac1{t-x}dx
=3\ln\left(\frac{t}{t-j}\right).
\end{equation*}

Now by Claim~\ref{cl:g} together with convexity of the function $g(\cdot)$ we get that 
the value obtained by the online algorithm is upper bounded by 
\begin{equation}\label{eq:ubound}
\sum_{j=1}^t \left[g(\expect[X^1_j])+g(\expect[X^2_j])\right]\leq \sum_{j=1}^{t} 2g\left(\frac32 \ln\left(\frac{t}{t-j}\right)\right)
\leq t\int_0^{1}2g\left(\frac32 \ln\left(\frac{1}{1-x}\right)\right)dx.
\end{equation}
We now split the interval $[0, 1]$ as $[0, 1]=[0, x_1]\cup [x_1, x_2]\cup [x_2, 1]$, where $\frac32 \ln\left(\frac{1}{1-x_1}\right)=1$ and $\frac32 \ln\left(\frac{1}{1-x_2}\right)=2$, i.e., $x_1=1-e^{-2/3}$ and $x_2=1-e^{4/3}$.
We get by \eqref{def-g} that the RHS of \eqref{eq:ubound} is equal to
\begin{equation*}
t\int_0^{1-e^{-2/3}}3 \ln\left(\frac1{1-x}\right)dx
+t\int_{1-e^{-2/3}}^{1-e^{-4/3}}\left[\frac32 \ln\left(\frac1{1-x}\right)+1\right]dx
+3t\cdot e^{-4/3} 
<  0.612\cdot 3t.
\end{equation*}
Recalling that the value of the standard LP on our instance is $3t$ completes the proof.
\end{proof}

\section{Stochastic allocation in the i.i.d.~model}
\label{sec:stochastic}

\paragraph{The i.i.d.~stochastic model.}
Here we consider a model where items arrive from some (possibly known or unknown) distribution $\cD$ over a fixed collection of items $M$. In each step, an item is drawn independently at random from $\cD$ and we must allocate it irrevocably to an agent. The total number of items can be either known or unknown.

In this model, we compare to the {\em expected offline optimum}, $OPT = \E[OPT(M)]$ where $M$ is the random multiset of items that appear on the input. We say that an algorithm is $c$-competitive if it achieves at least $c \cdot OPT$ in expectation over the random inputs (and possibly its own randomness).

\subsection{Diminishing returns on multisets}
\label{sec:diminishing}

In this section, we would like to consider the class of submodular valuations and its extension to multisets. Submodular valuations on $\{0,1\}^m$ express the property of diminishing returns, and this has indeed been the primary motivation for their modeling power as valuation functions. However, considering the stochastic setting with i.i.d.~samples, we should clarify how we deal with possible multiple copies of an item. In other words, we need to consider valuation functions $f:\ZZ_+^m \rightarrow \RR$.  An extension of submodularity to the $\ZZ_+^m$ lattice that has been used in the literature is the following condition: $f(x \vee y) + f(x \wedge y) \leq f(x) + f(y)$, where $\vee$ and $\wedge$ are the coordinate-wise max/min operations. Unfortunately, this condition does not quite capture the property of diminishing returns as it does in the case of $\{0,1\}^m$: note that in particular it does not impose any restrictions on $f(x)$ if the domain is $1$-dimensional, $x \in \ZZ_+$. Considering the property of diminishing returns, we would like the condition to imply that $f$ is concave in this $1$-dimensional setting. Therefore, we define the following property.

\begin{definition}
A function $f:\ZZ_+^m \rightarrow \RR$ has the property of diminishing returns, if for any $x \leq y$ (coordinate-wise) and any unit basis vector $e_i = (0,\ldots,0,1,0,\ldots,0)$, $ i \in [m]$,
$$ f(x+e_i) - f(x) \geq f(y+e_i) - f(y).$$
\end{definition}

Note that when restricted to $\{0,1\}^m$, this property is equivalent to submodularity.
Also, note that a simple way to extend a monotone submodular function $f:\{0,1\}^m \rightarrow \RR$ to $\tilde{f}: \ZZ_+^m \rightarrow \RR$, by declaring that additional copies of any item bring zero marginal value (i.e.~$\tilde{f}(x) = f(x \wedge \b1)$), satisfies the property of diminishing returns. In particular, coverage valuations on multisets interpreted in a natural way (multiple copies of the same set do not cover any new elements), have the property of diminishing returns. In some sense, we believe that this is the ``right extension" of submodularity to multisets, at least for applications related to combinatorial auctions and welfare maximization. 

We also consider the following natural notion of monotonicity.

\begin{definition}
A function $f:\ZZ_+^m \rightarrow \RR$ is monotone, if $f(x) \leq f(y)$ whenever $x \leq y$.
\end{definition}

\subsection{Our results}

We prove that in the i.i.d.~stochastic model with valuations satisfying the property of diminishing returns, the best one can achieve is a $(1-1/e)$-competitive algorithm. In fact, the factor of $1-1/e$ is achieved by the same greedy algorithm that gives a $1/2$-approximation in the adversarial online model \cite{FNW78,LLN06}.

\paragraph{Greedy algorithm:}
Suppose the multisets assigned to the $n$ agents before item $j$ arrives are $(T_1,\ldots,T_n)$.
Then assign item $j$ to the agent who maximizes $w_i(T_i+j) - w_i(T_i)$.

\

We remark that this algorithm obviously does not need to know the distribution or the number of items in advance.

\begin{theorem}
\label{thm:stochastic-greedy}
The greedy algorithm is $(1-1/e)$-competitive for welfare maximization with valuations satisfying the property of diminishing returns in the stochastic i.i.d.~model.
\end{theorem}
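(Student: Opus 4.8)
The plan is to track a single potential function measuring how far the greedy algorithm is behind the offline optimum and show it decays at a rate that forces a $(1-1/e)$ guarantee. Let $M$ denote the random multiset of $N$ items (with $N$ possibly random or fixed), and let $OPT = \E[OPT(M)]$. Suppose greedy has processed the first $j$ items and holds the assignment $(T_1^{(j)},\ldots,T_n^{(j)})$; write $G_j = \sum_i w_i(T_i^{(j)})$ for the current greedy welfare. The key quantity to control is the conditional expected gain $\E[G_{j+1} - G_j \mid \text{history through step } j]$. When item $k$ arrives (distributed as $\cD$), greedy adds $\max_i \bigl(w_i(T_i^{(j)} + k) - w_i(T_i^{(j)})\bigr)$. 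I would lower-bound this maximum by a weighted average: for any offline solution $(S_1,\ldots,S_n)$ of the eventual instance, the greedy marginal for item $k$ is at least $\sum_i w_i$'s marginal of $k$ relative to $T_i^{(j)}$ restricted to the items of $S_i$, in expectation over $k\sim\cD$. This is exactly where the \emph{diminishing returns on multisets} property is needed: it lets me say that adding item $k$ on top of the larger multiset $T_i^{(j)}$ is worth at least as much as adding it on top of a sub-configuration, even though $T_i^{(j)}$ may contain multiple copies of items that also appear in $S_i$ — ordinary lattice submodularity would not suffice here, which is the whole point of Section~\ref{sec:diminishing}.

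The core inequality I expect to prove has the shape
\begin{equation*}
\E\bigl[G_{j+1} - G_j \,\big|\, \mathcal{F}_j\bigr] \;\geq\; \frac{1}{N}\Bigl(OPT' - G_j\Bigr),
\end{equation*}
where $OPT'$ is (a suitable surrogate for) the offline optimum on the full random instance and $\mathcal{F}_j$ is the history. The factor $1/N$ arises because a fixed offline-optimal item appears, in expectation, among the remaining $N-j$ arrivals, and averaging the telescoped marginals of $w_i$ along $S_i$ against a uniformly-random position contributes a $1/N$-type factor; the monotonicity assumption ensures no negative terms appear when I drop items of $S_i$ not yet ``hit.'' Taking expectations and unrolling this recursion over $j = 0,\ldots,N-1$ gives $\E[G_N] \geq \bigl(1 - (1-1/N)^N\bigr) OPT \geq (1-1/e)\, OPT$. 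The standard subtlety — that greedy is compared against $OPT(M)$ for the \emph{realized} $M$ rather than a fixed benchmark — is handled by conditioning on $M$ first, running the argument with $S_i$ the optimal allocation for that $M$, and only then taking the outer expectation; linearity makes the $1/N$ recursion survive, and if $N$ is random one conditions on $N$ as well (or invokes the analysis of \cite{DJSW11}, of which this is an extension, for the bookkeeping).

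The main obstacle I anticipate is making the per-step inequality rigorous when $T_i^{(j)}$ and $S_i$ overlap as multisets: I must compare $w_i(T_i^{(j)} + k) - w_i(T_i^{(j)})$ to the ``residual value'' $w_i$ still owes toward $S_i$, and naively this requires a telescoping decomposition $w_i(S_i) - w_i(T_i^{(j)}\wedge S_i) \le \sum_{\ell \in S_i}\bigl(\text{marginal of } \ell \text{ over a growing chain}\bigr)$ together with the claim that each such marginal is dominated by the marginal of $\ell$ over $T_i^{(j)}$ itself. Diminishing returns gives the domination chain-link by chain-link, but one has to be careful that the chain is built by adding the \emph{extra} copies demanded by $S_i$ beyond what $T_i^{(j)}$ already contains, and that monotonicity covers the case $w_i(T_i^{(j)}) \ge w_i(S_i)$ (where the residual is vacuously nonpositive and the bound is trivial). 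Once that combinatorial lemma is in place, the rest is the routine $(1-1/e)$ recursion and a short argument that the realized-instance benchmark can be swapped in under the expectation.
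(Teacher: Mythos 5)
Your per-step recursion and the use of diminishing returns to telescope marginals are the right shape, and the final unrolling to $1-(1-1/m)^m\ge 1-1/e$ matches the paper. The genuine gap is in how you propose to handle the benchmark: ``conditioning on $M$ first, running the argument with $S_i$ the optimal allocation for that $M$, and only then taking the outer expectation.'' Once you condition on the realized multiset $M$, the $(j+1)$-st arrival is no longer an independent draw from $\cD$; it is uniform over the not-yet-arrived items of $M$. An item $\ell\in S_i$ all of whose copies have already arrived (and may have been allocated by greedy to other agents) then has conditional probability \emph{zero} of being the next arrival, so your telescoped inequality $\sum_{\ell\in S_i}\bigl(w_i(T_i^{(j)}+\ell)-w_i(T_i^{(j)})\bigr)\ge w_i(S_i)-w_i(T_i^{(j)})$ cannot be converted into the claimed bound $\E[G_{j+1}-G_j\mid \mathcal{F}_j, M]\ge\frac1N\bigl(OPT(M)-G_j\bigr)$: only the un-arrived portion of $S_i$ can contribute to the next step's expected gain, and its marginal value over $T_i^{(j)}$ need not dominate the residual. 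This is exactly the obstruction that makes the \emph{random-order} analysis of greedy hard (where $1-1/e$ is not known to hold), and it is why the i.i.d.\ structure must be exploited before, not after, fixing the benchmark.

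The paper's fix, which is the one key idea missing from your outline, is to replace the random benchmark $OPT(M)$ by a deterministic surrogate: the LP of Lemma~\ref{lem:LP-bound}, whose only coupling to the arrival process is through the expected counts $p_jm$, and which upper-bounds $\E[OPT(M)]$ by taking $x_{i,S}$ to be the probability that agent $i$ receives multiset $S$ in the realized optimum. The per-step bound then compares greedy to a \emph{fixed} randomized assignment rule (give an arriving copy of item $j$ to agent $i$ with probability $y_{ij}/(p_jm)$, where $y_{ij}=\sum_S x_{i,S}c_j(S)$); since the next item is a fresh draw from $\cD$ independent of the history, this rule's expected gain is computable and is at least $\frac1m\bigl(LP-\sum_iw_i(T_i)\bigr)$, which greedy dominates pointwise. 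Your diminishing-returns telescoping lemma is correct and is exactly what the paper uses to justify $w_i(T_i+S)-w_i(T_i)\le\sum_jc_j(S)\bigl(w_i(T_i+j)-w_i(T_i)\bigr)$; once you substitute the LP benchmark for $OPT(M)$, the rest of your argument goes through as you describe.
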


\begin{theorem}
\label{thm:stochastic-hardness}
Unless $NP = RP$, there is no $(1-1/e+\delta)$-competitive polynomial-time algorithm for welfare maximization with coverage valuations in the i.i.d.~stochastic model, for fixed $\delta>0$.
\end{theorem}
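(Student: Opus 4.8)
The plan is to reduce from Feige's hard instances of Max $k$-cover, reusing as a black box the offline coverage-valuation welfare instance $\cI$ of Section~\ref{sec:offline} ($n$ agents, $m = kn$ items, valuations $w_i$ given by the cyclic shift). From $\cI$ I would build an i.i.d.\ stochastic instance whose underlying item collection consists of the $m$ items of $\cI$, possibly together with a controlled number of zero-value padding items, with $\cD$ the uniform distribution on this collection — so that the result holds already for a known uniform distribution — and with the process running for a suitable number $N$ of steps, polynomial in $m$. As usual, the comparison is against $OPT = \E[OPT(M)]$, where $M$ is the realized multiset.

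The analysis follows the familiar YES/NO template. In the YES case I would lower-bound $OPT$ by an allocation rule that does not depend on the realized multiset — the cyclic-shift allocation of Section~\ref{sec:offline}, handing each arriving item to its ``owner'' — and show it collects value close to $n|U|$ in expectation. In the NO case I would bound the value of \emph{every} allocation, hence in particular the value obtained by the presumed online algorithm, by the concavity argument of Claim~\ref{cl:stage-j}: writing $X_i$ for the random number of items allocated to agent $i$ (so $\sum_i X_i \le N$), the NO property together with the concavified bound $\nu$ yields $\E[w_i] \le \nu(\E[X_i])$, and Jensen's inequality gives $\sum_i \E[w_i] \le n\,\nu(N/n)$, which for the chosen $N$ is at most $(1-1/e+\epsilon)\,n|U|$. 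Taking $\epsilon < \delta$, a $(1-1/e+\delta)$-competitive algorithm run on this random instance therefore returns, in expectation, a value strictly exceeding (by a constant fraction of $n|U|$) what is achievable in the NO case; since the total value always lies in $[0,n|U|]$, this distinguishes YES from NO with constant two-sided error. Exactly as in the proof of Theorem~\ref{thm:online-coverage}, I would then make the error one-sided: answer YES iff at termination some agent holds $\ell \le c_0 k$ items whose union covers more than $(1-(1-1/k)^\ell+\epsilon)|U|$. In the NO case this never occurs, and in the YES case the competitive guarantee forces it to occur with probability $\Omega(1)$, so we obtain an $RP$ algorithm for Max $k$-cover and conclude $NP=RP$.

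The step I expect to be the crux is calibrating $\cD$ and $N$ so that the $(1-1/e)$ computational gap of Feige's reduction actually survives the i.i.d.\ sampling. The two effects pull against each other: if $N$ is too small, coupon-collector losses drag $OPT$ in the YES case down toward $(1-1/e)\,n|U|$, which matches the NO-case bound and wipes out the gap; if $N$ is too large, the many copies of each item can be distributed among distinct agents, so an agent accumulates many different sets and covers almost all of $U$ even in the NO case, which both wipes out the gap and makes the bound $n\,\nu(N/n)$ vacuous. Pinning down the regime — together with the right amount of padding — in which the YES value stays close to $n|U|$ while each agent's \emph{useful} intake in the NO case stays $O(k)$, and making the attendant concentration estimates precise, is the technical heart of the argument; everything else is routine once that balance is struck.
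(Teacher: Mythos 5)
Your reduction template (YES/NO cases, the concavified bound $\nu$, Jensen, and the one\-/sided\-/error trick) matches the paper's, but there is a genuine gap at exactly the point you flag as ``the crux,'' and it is not a technicality that can be fixed by tuning $N$ and the padding: with the agent set held fixed at the $n$ agents of $\cI$, the regime you are looking for does not exist. Padding only rescales the arrival rate of real items, so the one free parameter is the expected number $c$ of copies of each item of $\cI$ (i.e., $cm$ real\-/item draws). In the YES case, each item of $\cI$ appears with probability about $1-e^{-c}$, and since agent $i$'s full coverage requires the $k$ specific items of $S_i$ (whose associated sets are disjoint and partition $U$), the natural allocation yields only about $(1-e^{-c})\,n|U|$; pushing this to $(1-o(1))\,n|U|$ is a coupon\-/collector requirement forcing $c = \Omega(\log m)$. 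In the NO case, the average intake per agent is $ck$, so your bound $n\,\nu(ck)$ is about $\bigl(1-(1-1/k)^{ck}+\epsilon\bigr)n|U| \approx (1-e^{-c}+\epsilon)\,n|U|$. These two expressions coincide for \emph{every} $c$: the same function $1-e^{-x}$ governs both the YES\-/case coupon\-/collector loss and the NO\-/case concavity bound, so the computational gap is wiped out at every arrival rate, and for $c=\Omega(\log m)$ the NO bound is vacuous. Your own description of the tension is accurate; the error is in believing a balance point exists.

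The missing idea in the paper's proof is to replicate the \emph{agents}, not to tune the item distribution: take $t=\mathrm{poly}(m)$ identical copies of each agent ($tn$ agents total) and let $tm$ i.i.d.\ items arrive uniformly from the $m$ items of $\cI$. The relevant concentration is then not ``each item appears at least once'' but ``each item appears $t\pm O(\sqrt{t\log t})$ times'' (a Chernoff bound on a sum, with $o(1)$ relative error), so in the YES case the $(1-o(1))t$ copies of each item can fully satisfy $(1-o(1))tn$ agents, giving $OPT \geq (1-o(1))\,tn|U|$. Meanwhile the average intake per agent stays at $tm/(tn)=m/n=k$, so the NO\-/case bound is $(1-(1-1/k)^{k}+\epsilon)\,tn|U| \approx (1-1/e)\,tn|U|$, holding deterministically. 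Once you add agent replication, the rest of your argument (including the one\-/sided\-/error step) goes through essentially as you wrote it.
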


Since coverage valuations satisfy the property of diminishing returns, we conclude that $1-1/e$ is the optimal factor in the stochastic i.i.d.~model for coverage valuations as well as any valuations satisfying diminishing returns.

\subsection{Analysis of the greedy algorithm for stochastic input}

Here we prove Theorem~\ref{thm:stochastic-greedy}.
Our proof is a relatively straightforward extension of the analysis of \cite{DJSW11} in the budget-additive case. First, we need a bound on the expected optimum.

\begin{lemma}
\label{lem:LP-bound}
The expected optimum in the stochastic model where $m$ items arrive independently, item $j$ with probability $p_j$, is bounded by
\begin{eqnarray*}
LP & = & \max \sum_{i,S} x_{i,S} w_i(S): \\
& & \forall j; \sum_{i,S} x_{i,S} c_j(S) \leq p_j m; \\
& & \forall i; \sum_S x_{i,S} = 1; \\
& & \forall i,S; x_{i,S} \geq 0
\end{eqnarray*}
where $w_i$ is the valuation of agent $i$, $S$ runs over all multisets of at most $m$ items, and $c_j(S) \geq 0$ denotes the number of copies of $j$ contained in $S$.
\end{lemma}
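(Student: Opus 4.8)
The plan is to show that the expected offline optimum is at most the optimal value of the displayed LP by exhibiting a feasible LP solution whose objective value equals (or dominates) the expected optimum. Fix the true underlying random process: $m$ items arrive, the $j$-th type appearing with probability $p_j$ independently, so in expectation $p_j m$ copies of item $j$ appear. For each realization, let $(S_1^*(\omega),\ldots,S_n^*(\omega))$ be an optimal offline allocation, where $S_i^*(\omega)$ is the multiset of items assigned to agent $i$. First I would define, for each agent $i$ and each multiset $S$, the variable $x_{i,S} = \Pr[S_i^*(\omega) = S]$, i.e.\ the probability that the optimal allocation hands agent $i$ exactly the multiset $S$. (If the optimal allocation is not unique, fix any measurable selection rule.) By construction $\sum_S x_{i,S} = 1$ for every $i$ and $x_{i,S}\ge 0$, so the second and third families of constraints hold automatically.

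The next step is to verify the packing constraints $\sum_{i,S} x_{i,S}\, c_j(S) \le p_j m$ for each item type $j$. For a fixed realization $\omega$, since the sets $S_i^*(\omega)$ partition the multiset $M(\omega)$ of arrived items, the total number of copies of $j$ used across all agents is exactly the number of copies of $j$ in $M(\omega)$, i.e.\ $\sum_i c_j(S_i^*(\omega)) = c_j(M(\omega))$. Taking expectations, $\sum_i \E[c_j(S_i^*(\omega))] = \E[c_j(M(\omega))] = p_j m$, since the expected number of copies of item $j$ among $m$ i.i.d.\ draws is $p_j m$. But $\E[c_j(S_i^*(\omega))] = \sum_S \Pr[S_i^*(\omega)=S]\, c_j(S) = \sum_S x_{i,S}\, c_j(S)$, so summing over $i$ gives $\sum_{i,S} x_{i,S}\, c_j(S) = p_j m$, which satisfies the constraint with equality (in particular $\le$). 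Hence $x$ is LP-feasible.

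Finally I would evaluate the objective. The expected offline optimum is $\E[\sum_i w_i(S_i^*(\omega))] = \sum_i \E[w_i(S_i^*(\omega))] = \sum_i \sum_S \Pr[S_i^*(\omega)=S]\, w_i(S) = \sum_{i,S} x_{i,S} w_i(S)$, which is exactly the LP objective evaluated at the feasible point $x$. Therefore the expected optimum equals the value of a feasible LP solution, and so is at most the LP optimum, proving the lemma. One technical point worth a sentence: the LP ranges over multisets $S$ of size at most $m$, and indeed with probability $1$ every $S_i^*(\omega)$ has size at most $m$ (the total number of arrived items is $m$), so the support of $x$ lies within the index set of the LP.

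\textbf{Main obstacle.} There is essentially no hard step here — the argument is a direct "expected optimal solution is LP-feasible" computation. The only things to be careful about are (i) measurability of the selection rule picking an optimal allocation for each $\omega$ when the number of realizations, though finite for fixed $m$, is handled uniformly, and (ii) making sure the expected-count identity $\E[c_j(M)] = p_j m$ is stated for the correct model (each of the $m$ draws independently equals $j$ with probability $p_j$). Neither of these is a genuine difficulty; the content of the lemma is simply that relaxing "a fixed allocation for each realization" to "a distribution over allocations, constrained only in expectation" can only increase the value.
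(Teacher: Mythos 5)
Your proposal is correct and follows essentially the same argument as the paper: define $x_{i,S}$ as the probability that the offline optimum assigns multiset $S$ to agent $i$, check feasibility via the expected item counts, and observe the objective equals $\E[OPT(M)]$. The only cosmetic difference is that you assert the packing constraint holds with equality (assuming the optimum allocates every arrived item), whereas the paper only needs the inequality; both are fine.
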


\begin{proof}
Consider the optimal (offline) solution $OPT(M)$ for each realization of the random multiset $M$ of arriving items. Let $x_{i,S}$ denote the probability that the multiset allocated to agent $i$ in the optimal solution is $S$. Then the expected value of the optimum is $\E[OPT(M)] = \sum_{i,S} x_{i,S} w_i(S)$. Also, each multiset $S$ contains $c_j(S)$ copies of item $j$, so the expected number of allocated copies of item $j$ is $\sum_{i,S} x_{i,S} c_j(S)$. On the other hand, this cannot be more than the expected number of copies of $j$ in $M$, which is $\E[c_j(M)] = p_j m$. Therefore, $x_{i,S}$ is a feasible solution of value $OPT = \E[OPT(M)]$.
\end{proof}

\begin{lemma}
\label{lem:greedy-step}
Assume that $w_i$ are monotone valuations with the property of diminishing returns.
Condition on the partial allocation at some point being $(T_1,\ldots,T_n)$.
Then the expected gain from allocating the next random item is at least
$\frac{1}{m} (LP - \sum_i w_i(T_i))$.
\end{lemma}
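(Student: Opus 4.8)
The plan is to lower-bound the greedy algorithm's marginal gain by exhibiting, via the LP optimal solution, a random allocation target that the greedy choice beats in expectation. Fix the current partial allocation $(T_1,\ldots,T_n)$. The next arriving item is $j$ with probability $p_j$. Greedy assigns $j$ to the agent maximizing $w_i(T_i+j)-w_i(T_i)$, so its expected gain is
\[
\sum_j p_j \max_i \left( w_i(T_i+j) - w_i(T_i) \right) \geq \frac{1}{m} \sum_j p_j m \cdot \max_i \left( w_i(T_i+j) - w_i(T_i) \right).
\]
The goal is to show the right-hand side is at least $\frac{1}{m}(LP - \sum_i w_i(T_i))$, i.e.\ that
\[
\sum_j (p_j m) \max_i \left( w_i(T_i+j) - w_i(T_i) \right) \geq LP - \sum_i w_i(T_i).
\]

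To do this I would take an optimal LP solution $\{x_{i,S}\}$ achieving value $LP$ and use it as a ``fractional comparison allocation''. The key inequality is a submodular-type telescoping bound: for each agent $i$ and each multiset $S$, using the diminishing returns property, one can bound $w_i(S)$ above by $w_i(T_i) + \sum_j c_j(S)\,(w_i(T_i+j)-w_i(T_i))$. Indeed, adding the $c_j(S)$ copies of each item $j$ one at a time on top of $T_i$, each marginal increment is at most the first increment $w_i(T_i+j)-w_i(T_i)$ by diminishing returns, and monotonicity handles the items of $T_i$ not in $S$. Therefore, since $\sum_S x_{i,S} = 1$,
\[
\sum_i w_i(T_i) = \sum_{i,S} x_{i,S} w_i(T_i), \qquad
\sum_{i,S} x_{i,S} w_i(S) \leq \sum_i w_i(T_i) + \sum_{i,S} x_{i,S} \sum_j c_j(S)\,(w_i(T_i+j)-w_i(T_i)).
\]
Now swap the order of summation in the last term: $\sum_j \sum_i \left( w_i(T_i+j)-w_i(T_i)\right) \sum_S x_{i,S} c_j(S)$. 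Bounding the inner increment by its max over $i$ is not immediately valid because the weights $\sum_S x_{i,S} c_j(S)$ depend on $i$; instead I bound $\sum_i (w_i(T_i+j)-w_i(T_i))\sum_S x_{i,S}c_j(S) \leq \left(\max_i (w_i(T_i+j)-w_i(T_i))\right)\sum_{i,S} x_{i,S} c_j(S) \leq \left(\max_i (w_i(T_i+j)-w_i(T_i))\right) p_j m$, using the LP constraint $\sum_{i,S} x_{i,S} c_j(S) \leq p_j m$ and nonnegativity of the marginals (which follows from monotonicity). Combining, $LP = \sum_{i,S} x_{i,S} w_i(S) \leq \sum_i w_i(T_i) + \sum_j p_j m \max_i (w_i(T_i+j)-w_i(T_i))$, which rearranges to exactly the claimed bound after dividing by $m$.

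The main obstacle is getting the telescoping inequality $w_i(S) \leq w_i(T_i) + \sum_j c_j(S)(w_i(T_i+j)-w_i(T_i))$ correct when $S$ and $T_i$ overlap or are incomparable: one must be careful to order the additions so that diminishing returns applies (each copy of $j$ is added to a multiset $\geq T_i$, so its marginal is $\leq$ the marginal of adding the first copy to exactly $T_i$), and to use monotonicity to discard any coordinates where $T_i$ exceeds $S$. A secondary point is ensuring all marginals $w_i(T_i+j)-w_i(T_i)$ are nonnegative so that replacing the $i$-dependent weight by $p_j m$ is an upper bound — this is where monotonicity of $w_i$ is used. Once these two structural facts are in place, the rest is the linear-algebraic rearrangement sketched above.
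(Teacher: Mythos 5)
Your proof is correct and follows essentially the same route as the paper's: the same telescoping bound from diminishing returns, the same quantities $y_{ij}=\sum_S x_{i,S}c_j(S)$ with the LP constraint $\sum_i y_{ij}\le p_j m$, and the same use of monotonicity. The only difference is cosmetic: where the paper compares greedy to a hypothetical randomized allocation rule, you directly bound the weighted sum $\sum_i y_{ij}(w_i(T_i+j)-w_i(T_i))$ by $p_j m$ times the maximum marginal, which is the same inequality.
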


\begin{proof}
Let $x_{i,S}$ be any feasible LP solution and let $y_{ij} = \sum_S x_{i,S} c_j(S)$. Recall that $c_j(S)$ denotes the number of copies of $j$ contained in $S$. Note that by the LP constraints, $\sum_i y_{ij} \leq p_j m$.
We use the notation $T_i+S$ to denote the union of multisets (adding up the multiplicities of each item). By the property of diminishing returns, we have
$$ w_i(T_i+S) - w_i(T_i) \leq \sum_j c_j(S) (w_i(T_i+j) - w_i(T_i)).$$
Adding up these inequalities multiplied by $x_{i,S} \geq 0$, we get
\begin{equation*}
\sum_{i,S} x_{i,S} (w_i(T_i+S) - w_i(T_i)) 
\leq \sum_{i,j,S} x_{i,S} c_j(S) (w_i(T_i+j) - w_i(T_i))
= \sum_{i,j} y_{ij}   (w_i(T_i+j) - w_i(T_i)).
\end{equation*}
Since $\sum_S x_{i,S} = 1$ by the LP constraints, and $w_i(T_i+S) \geq w_i(S)$ by monotonicity, we obtain
\begin{equation}
\label{eq:1}
\sum_{i,S} x_{i,S} w_i(S) - \sum_i w_i(T_i) \leq \sum_{i,j} y_{ij}  (w_i(T_i+j) - w_i(T_i)).
\end{equation}
Now consider a hypothetical allocation rule (depending on the fractional solution): If the incoming item is $j$, we allocate it to agent $i$ with probability $\frac{y_{ij}}{p_j m}$. (By the LP constraints, these probabilities for a fixed $j$ add up to at most $1$.) Since item $j$ appears with probability $p_j$, overall we allocate item $j$ to agent $i$ with probability $\frac{y_{ij}}{m}$. By (\ref{eq:1}), the expected gain of this randomized allocation rule is
\begin{equation*}
\E[\mbox{random gain}] = \sum_{i,j} \frac{y_{ij}}{m} (w_i(T_i+j) - w_i(T_i)) 
\geq \frac{1}{m}  \left(\sum_{i,S} x_{i,S} w_i(S) - \sum_i w_i(T_i) \right).
\end{equation*}
However, the greedy allocation rule gives each item to the agent maximizing her gain. Therefore, the greedy rule gains at least as much as the randomized allocation rule, for any feasible solution $x_{i,S}$.
This implies
\begin{equation*}
 \E[\mbox{greedy gain}] \geq \max \E[\mbox{random gain}] 
 \geq \frac{1}{m} \left(LP - \sum_i w_i(T_i)\right). \qedhere
 \end{equation*}
\end{proof}

Now we can prove Theorem~\ref{thm:stochastic-greedy}.

\begin{proof}[Proof of Theorem~\ref{thm:stochastic-greedy}]
Denote the allocation obtained after allocating $t$ items $(T^{(t)}_1,\ldots,T^{(t)}_n)$. Lemma~\ref{lem:greedy-step} states that conditioned on $(T^{(t)}_1,\ldots,T^{(t)}_n)$, the expected value after allocating 1 random item will be 
\begin{equation*}
 \E\Big[\sum_i w_i(T^{(t+1)}_i) \mid T^{(t)}_1,\ldots,T^{(t)}_n\Big] 
 \geq
 \sum_i w_i(T^{(t)}_i) + \frac{1}{m} \left(LP - \sum_i w_i(T^{(t)}_i)\right).
 \end{equation*}
Taking an expectation over the partial allocation $(T^{(t)}_1,\ldots,T^{(t)}_n)$, we obtain
\begin{equation*}
\E\Big[\sum_i w_i(T^{(t+1)}_i)\Big]
\geq
 \E\Big[\sum_i w_i(T^{(t)}_i)\Big] + \frac{1}{m} \E\Big[LP - \sum_i w_i(T^{(t)}_i)\Big].
 \end{equation*}
Let us denote $W(t) = \E[\sum_i w_i(T^{(t)}_i)]$. The last inequality states $W(t+1) \geq W(t) + \frac{1}{m} (LP - W(t))$, or equivalently $LP - W(t+1) \leq (1 - \frac{1}{m}) (LP - W(t))$. By induction, we obtain $$ LP - W(t) \leq \left(1-\frac{1}{m}\right)^t (LP - W(0)) \leq e^{-t/m} LP.$$
The expected value of the solution found by the greedy algorithm after $m$ items is $W(m) = \E[\sum_i w_i(T^{(m)}_i)]$; we conclude that $W(m) \geq (1 - 1/e) LP \geq (1-1/e) OPT$.
\end{proof}

\subsection{Optimality of $1-1/e$ in the stochastic i.i.d.~model}

Here we prove Theorem~\ref{thm:stochastic-hardness}.
We prove essentially that the stochastic online problem cannot be easier than the offline problem. However, the reduction is not quite black-box and we need some properties of the hard coverage instances that we discussed in Section~\ref{sec:online-coverage}.

\begin{proof}
Recall the instance $\cI$ of welfare maximization with coverage valuations (Section~\ref{sec:online-coverage}), for which it is NP-hard to achieve approximation better than $1-1/e$. We transform it into an instance $\cI^{[t]}$ in the stochastic i.i.d.~model as follows. We pick a parameter $t = poly(m)$ and produce $t$ identical copies of each agent in $\cI$. If the number of items in $\cI$ is $m$, we let $tm$ i.i.d.~items arrive from the uniform distribution on the $m$ items of $\cI$. By Chernoff bounds, with high probability the number of copies of each item on the input will be $t \pm O(\sqrt{t \log m}) = t \pm O(\sqrt{t \log t})$.

Consider the YES case. The items of $\cI$ can be allocated so that each of the $n$ agents covers the universe. Since we have at least $t - O(\sqrt{t \log t}) = (1-o(1)) t$ copies of each item with high probability, they can be allocated to $(1-o(1)) t n$ agents of the instance $\cI^{[t]}$ so that these agents get full value $|U|$. Thus the expected offline optimum is at least $(1 - o(1)) t n |U|$.

On the other hand, in the NO case, any agent in $\cI$ who gets $\ell \leq c_0 m / n$ items has value at most $(1 - (1 - n/m)^\ell + \epsilon) |U|$. Since the total number of items on the input of $\cI^{[t]}$ is $t m$ and the number of agents is $t n$, an agent can only get $m / n$ items on average. As the bound on the value as a function of the number of items is concave (and we can deal with the fact that this  bound works only up to $\ell \leq c_0 m / n$, similarly to Section~\ref{sec:online-coverage}), the optimum value is achieved if each agent receives $m/n$ items. Then the total value collected is $(1 - (1-n/m)^{m/n} + \epsilon) t n |U|$, which can be made arbitrarily close to $(1-1/e) t n |U|$. Note that this holds with probability $1$, irrespective of the randomness on the input.

If we had a $(1-1/e+\delta)$-competitive algorithm in the stochastic i.i.d.~model, we could distinguish these two cases with constant one-sided error, which would imply $NP = RP$.
\end{proof}

\pdfbookmark[1]{\refname}{My\refname}

\end{document}